\newcommand{\RR}{\mathbf{R}}
\newcommand{\argmin}[1]{\underset{#1}{\operatorname{arg}\,\operatorname{min}}\;}
\newcommand{\argmax}[1]{\underset{#1}{\operatorname{arg}\,\operatorname{max}}\;}
\newcommand{\prox}[1]{\mathrm{prox}_{#1}}
\newcommand{\dom}[1]{\mathrm{dom}{(#1)}}
\newcommand{\T}{\mathrm{T}}
\newtheorem{theorem}{Theorem}
\newtheorem{fact}{Fact}
\newtheorem{claim}{Proposition}
\theoremstyle{definition}
\newtheorem{definition}{Definition}
\newtheorem*{scheme}{Scheme}
\newtheorem{corollary}{Corollary}
\thanks{Huawei Technologies \& Inria Sophia Antipolis}
\thanks{Inria Sophia Antipolis}
\thanks{Huawei Technologies}
\thanks{Huawei Technologies}
\begin{document}
\makeRR   

\section{Introduction}

Software Defined Networking (SDN) technologies are radically transforming network architectures by offloading the control plane  (e.g., routing, resource allocation) to powerful remote platforms that gather and keep a local or global view of the network status in real-time and push consistent configuration updates to the network equipment. The computation power of SDN controllers fosters the development of a new generation of control plane architecture that uses compute-intensive operations. 
Initial design of SDN architectures \cite{vaughan2011openflow} had envisioned the use of one central controller. However, for obvious scalability and resiliency reasons large  networks already in production are considering a distributed SDN control plane \cite{kreutz2015software}.  Thus, a \emph{logically centralized} network control plane may consist of multiple controllers each in charge of a SDN domain of the network and operating together,  in a \emph{flat} \cite{stallings2013openflow} or \emph{hierarchical} \cite{hassas2012kandoo} architecture, to achieve global tasks.

In this paper, we study the problem of finding a global fair resource allocation when the control plane is distributed over several domain controllers. More specifically, we consider the case where the size of flows evolves over time and bandwidth allocations have to be quickly adjusted towards the novel \emph{fair} solution (in the sense of $\alpha$-fairness defined by Mo et al. in \cite{mo2000fair}). In distributed SDN architectures, controllers operate with full information in their domain and communicate (e.g., system states or optimization variables) with adjacent domain controllers or a central gathering entity. Exchanges between controllers are expensive in terms of communication delay and overhead. In this setting, a distributed algorithm may not have enough time to converge to optimum before it has to provide a feasible answer due to the scale of networks. Therefore, the main challenge is to produce very quickly good quality feasible solutions. Local mechanisms such as Auto-Bandwidth \cite{dhody2009autobandwidth} have been proposed to greedily and distributedly adjust the allocated bandwidth to support time-varying IP traffic in \emph{Multi Protocol Label Switching} (MPLS) networks. However, they do not ensure fairness and do not optimize resources globally. On the other hand, centralized algorithms have been proposed to solve the problem but fail at quickly providing good and feasible solution in a distributed setting \cite{mccormick2014real}. 

We propose a distributed algorithm that performs in real-time for the fair resource allocation problem in distributed SDN control planes. It is based on the \emph{Alternating Direction Method of Multipliers} (ADMM) \cite{boyd2011distributed} that has captured a lot of attention recently for its separability and fast convergence properties. Our algorithm, called Fast Distributed (FD)-ADMM, is designed to be fully deployed over a distributed SDN control plane and permits controllers to handle their domain simultaneously while operating together in the fashion of a general distributed consensus problem to achieve a global optimal value. We show that this algorithm can function in real-time by iteratively producing a \emph{feasible} (global) resource allocation strategy that converges to the $\alpha$-fair optimal allocation. It produces very quickly (in fact, from the first iterations) good quality feasible solutions that permit to adjust in milliseconds bandwidth for flows that evolve quickly and need immediate response, a property that standard dual decomposition methods such as the one in \cite{mccormick2014real} do not have. 

We argue that this property is crucial as the network state (e.g., flow size, flow arrival/departure, link/node congestions) may be affected by abrupt changes. Thus, we claim that it is often preferable to have a quick access to a good quality solution rather than a provably asymptotically optimal solution with poor convergence rate. Therefore, we show that our algorithm is a good candidate for real-time fair resource allocation. To this aim, we compare the performances of the algorithm to the Lagrangian dual splitting method in \cite{voice2006stability, mccormick2014real}, a standard decomposition method that violates feasibility, demonstrates poor convergence rate, hence responds slower in real-time.

Moreover, we provide an explicit and adaptive tuning for the penalty parameter in FD-ADMM so that an optimal convergence rate can be approached on any instance execution. And we show that projections can be massively parallelized link-by-link yielding a convergence rate of the algorithm that does not depend on the way the network is partitioned into domains. 

The remainder of this paper is organized as follows. Section~\ref{sec:related} surveys the related work around the real-time fair resource allocation problem. Sections~\ref{sec:problem} and \ref{sec:consensus} explicitly reformulate the fair resource allocation problem in the fashion of a general distributed consensus problem, addressed with the terminology of proximal algorithms.  Section~\ref{sec:tuning} discusses on an optimal tuning of the penalty parameter in FD-ADMM. Section~\ref{sec:execution} provides simulations that validate our approach and finally, Section~\ref{sec:conclusion} concludes the paper.

\section{Related work}
\label{sec:related}

The concept of fair resource allocation has been a central topic in networking. Particularly, \emph{max-min} fairness\footnote{A resource allocation strategy is said to be max-min fair if no route can increase its allocation while remaining feasible without penalizing another route that has a smaller or equal allocation} has been the classic resource sharing principle \cite{bertsekas1992data} and has been studied extensively. The concept of \emph{proportional fairness} and its weighted variants were introduced in \cite{kelly1998rate}. Later, a spectrum of fairness metrics including the two former ones was introduced by Mo et al. in \cite{mo2000fair} as the family of $\alpha$-fair utility functions. Some early notable work on max-min fairness includes \cite{charny1995congestion}, where the authors propose an asynchronous distributed algorithm that communicates explicitly with the sources and pays some overhead in exchange for more robustness and faster convergence. Later in \cite{skivee2004distributed}, a distributed algorithm is defined for the weighted variant of max-min fair resource allocation problem in MPLS networks, based on the well-known property that an allocation is max-min fair if and only if each \emph{Label-Switched Path} (LSP) either admits a \emph{bottleneck link} amongst its used links or meets its maximal bandwidth requirement (see Definition 4 there of a bottleneck link). The problem of Network Utility Maximization (NUM) was also addressed with standard decomposition methods that could give efficient and very simple algorithms based on gradient ascent schemes performing their update rules in parallel. In this context, Voice \cite{voice2006stability}, then McCormick et al. \cite{mccormick2014real} tackle the $\alpha$-fair resource allocation problem with a gradient descent applied to the dual of the $\alpha$-fair resource allocation problem. 

In these works, no mention is made on the potential (in fact, systematic) feasibility violation of the sequences generated by those algorithms, which we believe is a matter that deserves attention in real-time setups. Motivated by this, recently the authors of \cite{sundaresan2016iterative} provide a feasibility preserving version of Kelly's methodology in \cite{kelly1998rate}. Their algorithm introduces a slave that gives at each (master) iteration an optimal solution of a weighted proportionally fair resource allocation problem that is explicitly addressed in the case of polymatroidal and flow aggregating networks only. As a matter of fact, we contribute to this problem with an efficient \emph{real-time} version of the slave process, for any topology, preserving feasibility at each (slave) iteration. Amongst approximative approaches, one can quote the very recent work \cite{marasevic2016fast} where a multiplicative approximation for $\alpha \neq 1$ and additive approximation for $\alpha = 1$ is provably obtained in poly-logarithmic time in the problem parameters. Moreover, starting from any point, the algorithm reaches feasibility within poly-logarithmic time and remains feasible forever after. The algorithm described in our paper solves the problem optimally and reaches feasibility as from the first iteration from any starting point.

The work around ADMM is currently flourishing. The $O(\frac{1}{n})$ best known convergence rate of ADMM \cite{he2012douglas} failed to explain its empirical fast convergence until very recently, for instance in \cite{Deng2016}, where global linear convergence rates are established in four scenarios of the strongly convex case. ADMM is also well-known for its performance that highly depends on the parameter tuning, namely, the penalty parameter in the augmented Lagrangian formulation (see Section \ref{sec:problem} below). An effective use of this class of algorithms cannot be decoupled from an accurate parameter tuning, as convergence can be extremely slow otherwise. Thus, in the same paper  \cite{Deng2016}, the authors provide a linear convergence proof that yields a convergence rate in a closed form that can be optimized with respect to the objectives parameters. Therefore, thanks to these works, an optimal tuning of ADMM for $\alpha$-fair resource allocation is now available. Several papers use the distributivity of ADMM to design efficient algorithms solving consensus problems in e.g. model predictive control and congestion control, without however addressing this fundamental detail. In the simulations of \cite{mota2012distributed} for instance, the authors try several choices of the penalty parameter and plot the best result found for each point.

\noindent{\textbf{Our contribution:}} In this paper, we reformulate the $\alpha$-fair resource allocation problem and we design a distributed algorithm based on ADMM, called FD-ADMM. We show that this algorithm outputs at each iteration a feasible resource allocation strategy that converges to the unique optimum of the problem. We also provide an adaptive strategy to correctly tune the FD-ADMM penalty parameter and we show that projections can be  massively parallelized on a link-by-link basis.
Finally, we show how our algorithm outperforms the dual methods mentioned above in terms of feasibility preservation and responsiveness in dynamic scenarios.


\section{Fair resource allocation problem}
\label{sec:problem}
In this section, we reformulate the $\alpha$-fair resource allocation problem as a convex optimization problem. Then, we start off with our algorithm design by presenting C-ADMM, an algorithm that solves our problem in a centralized fashion and that will be helpful to design our distributed algorithm.

\subsection{Problem reformulation}
\label{subsec:problemformulation}
Let $R$ be a set of connection requests over a network with a set $J$ of capacitated links. Each link $j\in J$ has a total capacity of $C_j \in \RR_+$. Each request $r$ is represented by a route containing a subset of $J$ that, without any confusion, we still denote as $r$. With some abuse of notation, we write $j\in r$ or $r\in j$ to say that link $j$ belongs to the route $r$, or route $r$ goes through link $j$, respectively. Given the set of requests and their corresponding utility function $f_r$, the network allocates bandwidth to all the requests in order to maximize the overall utility $f = \bigoplus_r f_r$, while satisfying feasibility, i.e., the link capacity constraints. Denote by $x_r$ the capacity allocated to route $r$, and let $x = (x_r)_{r\in R}$. Then, we have the classic capacity constraint in matricial form:
\begin{equation}
\label{eq:capconU}
Ax \leq C
\end{equation}
where $A = (a_{jr})_{jr}$ is the link-route incidence binary matrix:

$$a_{jr} = \left\{ \begin{array}{ll}
1  & \mbox{if } j\in r\\
0 & \mbox{otherwise}.
\end{array}\right.$$

\noindent Our aim is to compute an $\alpha$-fair capacity allocation $x$:
\begin{equation}
\label{eq:fairness}
\max_{x\geq 0, Ax\leq C}
f^\alpha(x) 
\end{equation}
where the $\alpha$-fair utility function $f^\alpha$ is defined according to the Mo and Walrand's classic characterization in \cite{mo2000fair}, that we report below.

\begin{definition}[$(w,\alpha)$-fairness, \cite{mo2000fair}]
\label{defalphafair}
Let $F\subset \RR_+^n$ be a non-empty feasible set not reduced to \{0\}.
Let $w \in \RR_+^n$ and $x^*\in F$. We say that $x^*$ is $(w,\alpha)$-\emph{fair} (or simply $\alpha$-fair when there is no confusion on $w$) if the following holds:
\[\forall r\in [1,n],\quad x^*_r >0\quad\mbox{and}\quad\forall x\in F 
,\quad \sum_{r=1}^n w_r\frac{x_r - x^*_r}{x^{*\alpha}_r} \leq 0. \]
Equivalently, $x^*$ is $(w,\alpha)$-fair if, and only if $x^*$ maximizes the $\alpha$-fair utility function $f^\alpha$ defined over $F-\{0\}$:
\[f^\alpha(x) = \sum_{r = 1}^n f^\alpha_r(x_r),
\]
where
$
f_r^\alpha(x_r) =
\left\{ \begin{array}{ll}
w_r \frac{x_r^{1-\alpha}}{1-\alpha}, & \alpha \neq 1,\\
w_r \log(x_r), & \alpha=1.
\end{array}\right.
$
\end{definition}

The success of $\alpha$-fairness is due to its generality: in fact, for $\alpha=0,1,2,\infty$ it is equivalent to max-throughput, proportional fairness, min-delay, and max-min fairness, respectively. We observe that the $\alpha$-fair utility functions are non-decreasing, strictly concave, non-identically equal to $-\infty$, and upper semi-continuous. It is well-known that under these conditions, the function $f^\alpha$ admits a unique maximizer over any convex closed non-empty set.

From now on, we adopt the convex optimization terminology. Define for each $r\in R$ the convex cost function $g_r:~x_r~\mapsto~ g_r(x_r) := -f_r(x_r)$. Then,  $g :=  \bigoplus_r g_r = - f^\alpha $ is a convex closed proper\footnote{\emph{closed} stands for lower semi-continuous and \emph{proper} means non-identically equal to $\infty$} function over $\RR_+^{|R|}$. 
%
Let us introduce $\iota$ as the indicator function of the convex closed set  $\{Ax\leq C, x\geq 0\}_x$:
$$\iota(x) =
\left\{ \begin{array}{ll}
0 & \mbox{if }Ax\leq C\\
\infty, & \mbox{otherwise}.
\end{array}\right.$$ 
Then our $\alpha$-fair problem can equivalently be formulated as the following convex program:

\begin{equation}
\min_{x,z} \sum_{r\in R}g_r(x_r) + \iota(z),
\label{ConvObj}
\end{equation}
\begin{equation}
\mbox{s.t.}\quad x-z = 0.
\label{ConvConst}
\end{equation}

\subsection{ADMM as an augmented Lagrangian splitting}
\label{augmentedlag}

Let us begin by recalling to the reader the basic principles of the \emph{Alternating Direction Method of Multipliers} (ADMM), applied to our $\alpha$-fair problem.
To this aim, the augmented Lagrangian with penalty $\lambda^{-1}>0$ for problem~(\ref{ConvObj}-\ref{ConvConst}) writes\footnote{$a^\T b$ is the Euclidean product of $a$ and $b$ and $||\cdot||$ the Euclidean norm.}
\begin{equation}
L_{\lambda^{-1}}(x,z,u) = g(x) +\iota(z) + u^\T(x-z) + \frac{1}{2\lambda} ||x-z||^2
\end{equation}
where $u$ is the vector of Lagrange multipliers.
The method of multipliers consists in the following update rules, where the superscript $k$ denotes an iteration count:

\begin{equation}
(x^{k+1},z^{k+1}) = \argmin{x,z} L_{\lambda^{-1}}(x,z,u^k) \tag{M1}
\label{M1}
\end{equation}
\begin{equation}
u^{k+1} = u^k + \frac{1}{\lambda}(x^{k+1} - z^{k+1}). \tag{M2}
\end{equation}

The main idea in alternating directions is in fact to decouple the variables $(x,z)$ in the optimization stage~\ref{M1}: instead of a global optimization over $(x,z)$, we only optimize $L_{\lambda^{-1}}$ with respect to the variable $x$, then, given the new update of $x$, we optimize $L_{\lambda^{-1}}$ with respect to $z$. Before stating the corresponding update rules of ADMM, let us first remind the following Fact.

\begin{fact}[\cite{boyd2011distributed}]
\label{fact:def}
Let $h: \RR^n \to \bar{\RR} = \RR\cup \{\infty\}$ be a closed proper convex function. The set $\mathrm{dom}(h)$ denotes the domain of $h$, that is the set upon which $h$ takes real values.
Assume $\mathrm{dom}(h) \neq \emptyset$. Then, the following facts hold:
\begin{enumerate}[\itshape (i)]
\item For $u\in \RR^n$, $\lambda \in \RR^*_+$, the minimization problem $$u^*_\lambda= \argmin{x}\left\{h(x) + \frac{1}{2\lambda} ||u-x||^2\right\}$$ admits a unique solution. The ($\lambda$-scaled) \emph{proximal operator} of $h$ is the well-defined map $\prox{\lambda h}: u \to u^*_\lambda$.

\item Assume that $h$ takes the form $h(x,y) = h_1(x)+ h_2(y)$, for $(x,y)\in \RR^p\times \RR^{n-p}$ (write $h = h_1 \bigoplus h_2$) where $h_1, h_2$ are both closed, proper and convex. Then, for $(u,v) \in \RR^p\times\RR^{n-p}$,  $\prox{\lambda h}(u,v) =\left( \prox{\lambda h_1} (u), \prox{\lambda h_2}(v)\right)$.

\item Assume that $h$ is the indicator function of a closed convex non-empty set $F$.
Then $P_F := \prox{\lambda h}$ is the Euclidean projection onto $F$.
\end{enumerate}
\end{fact}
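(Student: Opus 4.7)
The plan is to address the three items in order, each reducing to standard convex-analytic arguments. For (i), I would show that the functional $\phi_u(x) := h(x) + \frac{1}{2\lambda}\|u-x\|^2$ is strongly convex, proper, closed, and coercive. Strong convexity is immediate from the quadratic penalty since $h$ is convex; properness follows from $\mathrm{dom}(h)\neq\emptyset$; lower semi-continuity comes from $h$ being closed. For coercivity, I would invoke the fact that any closed proper convex $h$ admits a continuous affine minorant $\ell$, so $\phi_u(x)\geq \ell(x) + \tfrac{1}{2\lambda}\|u-x\|^2 \to +\infty$ as $\|x\|\to\infty$. Existence of a minimizer then follows from the direct method of the calculus of variations, and uniqueness from strict convexity, so $\prox{\lambda h}$ is well defined.

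For (ii), I would observe that the objective separates along the block decomposition: for $(x,y)\in\RR^p\times\RR^{n-p}$,
\[h_1(x)+h_2(y)+\frac{1}{2\lambda}\|(u,v)-(x,y)\|^2 = \left[h_1(x)+\tfrac{1}{2\lambda}\|u-x\|^2\right] + \left[h_2(y)+\tfrac{1}{2\lambda}\|v-y\|^2\right].\]
Since $x$ and $y$ appear in disjoint blocks, the joint minimization decouples into two independent subproblems, whose unique minimizers are $\prox{\lambda h_1}(u)$ and $\prox{\lambda h_2}(v)$ by part (i).

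For (iii), with $h=\iota_F$, the indicator enforces $x\in F$ and contributes $0$ inside $F$, so
\[\argmin{x}\bigl\{\iota_F(x)+\tfrac{1}{2\lambda}\|u-x\|^2\bigr\} = \argmin{x\in F}\|u-x\|^2,\]
and the right-hand side is by definition the Euclidean projection $P_F(u)$; the scalar $\lambda$ drops out of the argmin.

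The only step that requires real convex-analytic input is the coercivity argument in (i): the existence of a continuous affine minorant of a closed proper convex function is a classical consequence of the Hahn-Banach/supporting-hyperplane theorem applied to the epigraph of $h$, and this is where I would expect to spend the most care. Parts (ii) and (iii) are essentially formal once (i) is in place.
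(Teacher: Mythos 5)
Your proposal is correct: the affine-minorant coercivity argument plus the direct method and strict convexity gives (i), and the block-separability and indicator-function observations settle (ii) and (iii) exactly as needed. Note that the paper itself offers no proof of this Fact --- it is quoted from the ADMM literature (Boyd et al.) --- so there is no in-paper argument to compare against; your proof is the standard convex-analysis one and fills that gap correctly (one small remark: in finite dimension you could equally deduce existence from the boundedness of sublevel sets implied by strong convexity, but that rests on the same affine-minorant/supporting-hyperplane input you already identify).
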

The definition of a proximal operator being set, a straightforward calculus shows that we have:
\begin{equation}\nonumber
\forall x,u\quad \argmin{z} L_{\lambda^{-1}}(x,z,u) = \prox{\lambda\iota}(x + \lambda u)
\end{equation}
\begin{equation}\nonumber
\forall z,u \quad \argmin{x} L_{\lambda^{-1}}(x,z,u) = \prox{\lambda g}(x - \lambda u).
\end{equation}
ADMM can thus be expressed in the proximal ($\lambda$-scaled) form, which we refer to as \emph{Centralized} ADMM (C-ADMM).

\begin{algorithm}[H]
\caption{Centralized ADMM (C-ADMM)}
\label{alg:CADMM}
\begin{algorithmic}[1]
\Require{Initial values  $z$, $v$}
\While{a suitable termination condition is not met}
	\State $x \gets \prox{\lambda g}(z - v)$ \label{CADMMop1}
	 \State $z \gets P(x + v)$ \label{CADMMop2}
	\State $v \gets v +x - z$
\EndWhile
\end{algorithmic}
\end{algorithm}
In Algorithm~\ref{alg:CADMM}, $P = \prox{\lambda \iota}$ is the projection on $\{Ax\leq C, x\geq 0\}_x$, and $v = \lambda u$ the  $\lambda$-scaled dual variable. Now, the first step of Algorithm~\ref{alg:CADMM} (line~\ref{CADMMop1}) can be separated thanks to the separability property of the objective function, see Fact~\ref{fact:def}. In fact, $g$ is fully separable, as $g(x) = \sum g_r(x_r)$. Thus, the proximal update of line~\ref{CADMMop1} takes the trivially parallelized form:

\begin{equation}
\forall r\quad x_r^{k+1} =  \prox{\lambda g_r}(z_r^k - u_r^k)\label{CADMMx}
\end{equation}
such that each local variable $x_r$ can be computed separately. 

Through expression~\eqref{CADMMx}, we are thus able to provide an efficient update rule for $x$, provided that the separate proximal computations are inexpensive. However, two main issues arise. 

\noindent \textbf{Main issues with C-ADMM:} \textit{a)} First, an update of the variable $z$ in line~\ref{CADMMop2} of Algorithm~\ref{alg:CADMM} requires full knowledge of the projection mapping, which in turn requires full information on the capacity set of the network. Thus, this global update rule represents an important limiting factor to the design of a fully distributed algorithm, which is our main design interest here to follow the distribution of SDN control planes. 

\textit{b)} Moreover, although the convergence of C-ADMM may only require some tens of iterations (see Section~\ref{sec:execution} for further details), it may be slow in terms of computation time due the successive application of a projection algorithm that would not scale with respect to the problem size. This also gives rise to a double loop algorithm where each iteration requires the convergence of an inner process that can be time-consuming. Indeed, computing the projection of a generic point onto a closed convex non-empty polyhedron is in general non-trivial. Hence, for general polyhedra, one has to operate alternate projections, summon quadratic programming solvers or use  iterative algorithms such as the one in \cite{iusem1987simultaneous}. 

We address issues \textit{a,b)} in the next section, where we  propose FD-ADMM, a distributed version of C-ADMM. 


\section{The general consensus form of ADMM: an efficient distributed algorithm design}
\label{sec:consensus}

In this section, we show how to alleviate the cost of the global projection sub-routine in C-ADMM (line~\ref{CADMMop2}) by decomposing the formulation with respect to the network links of each SDN domain in the fashion of a consensus problem, and present FD-ADMM.  As stated at end of Section~\ref{sec:problem}, the global knowledge of the topology and the computational effort required by the projection step (line~\ref{CADMMop2}) of C-ADMM are not affordable in the distributed SDN control plane. Thus, the decomposition permits to respect the locality of the different domain controllers that now handle the projections link by link efficiently and in parallel. The decomposition into domains can be orchestrated by the SDN architect without any constraint. Unavoidably though, domains will need to exchange information as routes may traverse different domains. 

\subsection{Preliminaries}
\label{prel}
We organize the network into several domains $J_p, p = 1\ldots P$ such that $(J_p)_p$ forms a partition of the set of links $J$. Let $R_p$ be the set of routes traversing the domain $J_p$ via some link $j\in J_p$. More formally, $R_p = \{r\in R:  \exists \,j\in J_p \mbox{ s.t. }  j\in r\}$. Hence, $(R_p)_p$ forms a covering of $R$.
Let $\iota_j$ denote the indicator function for link $j \in J_p$, i.e.,

\begin{equation}
\iota_j(x) =
\left\{ \begin{array}{ll}
0 & \mbox{if} \ \sum_{r\in j} x_r \leq C_j \\
\infty & \mbox{otherwise}. 
\end{array}\right.
\end{equation}

%
Also, let us define $S_j := \dom{\iota_j}$.
\noindent Thus, for each $j\in J$, $S_j$ 
 is the (convex, closed) capacity set of the link $j$. Finally, for $j\in J$ and $z \in \RR^{R}$,  $\mbox{\textsc{Projection}}(j,z)$ denotes the Euclidean projection of $z$ onto $S_j$.



 
\subsection{Consensus form}

We can now reformulate our objective to a fully separable form. For ease of notation, the variable $x$ will be written $z_0$ and we define $R_0 = R$. We also define an additional variable, $\tilde{z}_r$, that will represent the consensus value of $z_{0r}$ found for each route $r$ over all the domains handling the route $r$.
We write $I_r = \{q\in [0,P] \quad r\in R_q\}$ to design the set of domain indices (including index 0) which $r$ belongs to. 
In the same fashion as in Section~\ref{augmentedlag}, we plug the feasibility constraints into the objective. Each constraint being now handled separately, we can formulate Problem~\eqref{ConvObj}, \eqref{ConvConst} as follows:

\begin{equation}
 \min \sum_{r\in R} g_r(z_{0r}) + \sum_{j\in J} \iota_j(z_0).
\end{equation} 
 
 In order to obtain a separable objective and fully benefit from the separability property in Fact~\ref{fact:def}, we artificially create a copy of the variable $z_0$ for each link $j$. This variable will be handled by the unique domain $J_p$ containing $j$. For each $j$, let $z_j \in \RR^{|R|}$ be the copy of $z_0$ for link $j$.

%
Creating a complete copy of all the variables for each domain is, nevertheless, of no use. Each domain indeed only needs information and manipulation over the only variables associated with the routes that they handle completely (the route is included in the domain's links)  or partially (the route meets other domains). Now, $\iota_j$ actually depends only on the sub-variable $(z_j)_{R_j}\overset{\mathrm{def}}{=}(z_{jr})_{r\in R_j} $. We erase all the information that is irrelevant to region $J_p$: $z_j \in \RR^{R_j}$. We can thus write the objective as follows:
 
 \begin{equation}
 \min G(z) = \sum_{r\in R} g_r(z_{0r}) + \sum_{j\in J} \iota_j((z_j)_{R_j}).
 \end{equation}
 
To sum up, \emph{we have artificially separated the objective function by creating a minimal number of copies of the primal variable $z_0$ in order to fully distribute the problem}. Now, instead of a global resource allocation variable, several copies of the variable account for how its value is perceived by each link of each domain. 
To enforce an intra- (local) and inter- (global) domain consistent value of the appropriate allocation, consensus constraints are added to the problem. This new formulation can be interpreted as a multi-agent consensus problem formulation where route $r$ has cost $g_r$, and link $j$ has cost $\iota_j$. As we separated the global objective on purpose, the separability property of the proximal operator thus gives the following:
$$\prox{\lambda G}(u) = \left( (\prox{\lambda g_r}(u_{0r}))_r,  (\mbox{\sc{Projection}}(j,u_j))_j \right).$$
These considerations permit next to write our final distributed consensus model where each agent only has access to local information.
  
\subsection{Fast Distributed ADMM}
  
We can finally distribute ADMM by putting into practice the tricks described in the previous section. Then, the general consensus form of the problem can be expressed as follows.
 
  \begin{equation}
 \min \sum_{r\in R} g_r(z_{0r}) + \sum_{j\in J} \iota_j(z_j)
 \end{equation}
 \begin{equation}
 z_{jr} = z_{lr} \quad \forall r\in R_j\cap R_l \quad \forall j,l \in \{0\}\cup J
 \label{general consensus}
 \end{equation}
where $z_j = (z_{jr})_{r\in R_j} \in \RR_+^{|R_j|}$. By applying ADMM to this formulation and using again Fact~\ref{fact:def} we obtain, after some simplification, Algorithm~\ref{FDADMM} (Fast Distributed (FD)-ADMM). \\
To update the consensus variables $\tilde{z}_r$, we exploit the fact that the Euclidean projection of a point $y\in \RR^n$ onto the diagonal is simply its average $\frac{1}{n} \sum y_i \mathbf{1}$. Hence, if $I$ denotes the indicator function of the feasible set~\eqref{general consensus}, we have: $$\forall r\in R \quad \prox{\lambda I} (u)_r = \frac{1}{|J_r|+1}\left( \sum_{l \in r} u_{lr} + u_{0r}\right).$$  This yields the simple update rules at lines~\ref{alg2:comm} and \ref{send}\footnote{These updates rules are also simplified using the straightforward fact that the sum $\sum_{l\in r}  u_{lr}$ is constant. It can thus be fixed to $0$ by initialization.}.\\ Notably, even in the distributed case, each domain $p$ can compute at each iteration a \emph{globally} feasible allocation $z_{*r}$ for each of the routes $r \in R_p$ (see Proposition~\ref{claim:feasible}).
 

 \begin{algorithm}[t]
 \caption{Fast Distributed ADMM (FD-ADMM)}\label{FDADMM}
 \begin{algorithmic}[1]
 \Procedure{$\mbox{\textbf{of} \textsc{ Domain} $p$}$}{}
 \Require{Reciprocal penalty parameter $\lambda$, $(g_r)_{r\in R_p}$}
 \State \mbox{\sc Receive} $z_{qr}, z_{*qr}\quad \forall q\in I_r\quad \forall r \in R_p$\label{receive} 
 \State $ \mbox{\bf\sc Enforce}\quad z_{*r} = \min_{q\in I_r}  z_{*qr}\quad \forall r\in R_p$
 \State $ \tilde{z}_{r} \gets \frac{1}{|J_r|+1} \left(\sum_{q\in I_r}
 z_{qr}+  z_{0r}\right) \quad \forall r\in R_p$\label{alg2:comm}
 \For{$j\in J_p \cup \{0\}$}
 	\State $ u_{jr} \gets u_{jr} +z_{jr} -\tilde{z}_{r}\quad \forall r\in R_j$
 	\State $z_j \gets \mbox{\sc Projection}(j,\tilde{z} - u_j)$ \label{projectionstep}
\EndFor
\State $ z_{0r} \gets \prox{\lambda g_r}(\tilde{z}_{r} - u_{0r})\quad \forall r \in R_p $\label{updatecentral}

\State \mbox{\sc Send} $z_{pr} = 
\sum_{j\in J_r \cap J_p} z_{jr}$ and $ z_{*pr} = \min_{j\in J_p} z_{jr}$ to domains $q \in I_r \quad \forall r\in R_p \label{send}$

\EndProcedure
 \end{algorithmic}
 \end{algorithm}

\noindent \textbf{Communication among domain controllers:} In FD-ADMM, \emph{only domains that do share a route together have to communicate}. The communication procedures among the domain controllers are described at lines~\ref{receive} and \ref{send}. In these steps, the domains gather from and broadcast to adjacent domains the sole information related to routes that they share in common. In particular, domains are blind to routes that do not traverse them, and can keep their internal routes secret from others. In details, after each iteration of the algorithm, each domain $J_p$ receives the minimal information from other domains such that $J_p$ is still able to compute a local value $z_{pr}$ and a locally feasible value $z_{*pr}$. Next, $J_p$ send them back to neighboring domains $I_r$ that $r$ traverses.

\emph{Communication overhead:} In terms of overhead, we can easily evaluate the number of floats transmitted between each domain at each iteration. At each communication, domain $J_p$ must transmit $z_{pr}$ and $z_{*pr}$ for each $r\in R_p$ to each other domain that $r$ traverses. The variable $z_0$ does not need to be centralized or transmitted between controllers. Each domain controller may actually have a copy $z_0$ and perform the (low-cost) computation of their update rule (see line~\ref{updatecentral} in Algorithm~\ref{FDADMM}) locally. Hence, domain $p$ transmits in total $2 \sum_{q \neq p} |R_p\cap R_q|$ floats to the set of its peers. As a comparison, in a distributed implementation of the algorithm given in \cite{mccormick2014real} and stated in Section~\ref{sec:execution}, each domain $p$ transmits in total $\sum_{q\neq p} |\{j\in J_p, \exists r\in R_q \mbox{ s.t. }  j\in r\}|$ floats to the set of its peers, which is bounded by $(P-1)|J_p|$ as $|R|$ grows. 

\noindent \textbf{Feasibility preservation:} A potential drawback of the distributed approach is the potential feasibility violation by the iterate $\tilde{z}^k$. However, we have the following positive result.
\begin{claim}
\label{claim:feasible}
FD-ADMM provides a sequence of feasible points that converges to the optimum.
\end{claim}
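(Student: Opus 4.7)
The plan is to split the statement into two parts and handle them separately. First, I would argue that FD-ADMM is merely standard two-block consensus ADMM written in a distributed way, so convergence of the copies $z_{jr}^k$ to a common optimal value follows from the classical ADMM theory invoked at the end of Section~\ref{augmentedlag}. Second, I would read feasibility directly off the per-link projection step, together with the fact that $z_{*r}^k$ is defined as a minimum over the link copies.

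For convergence, I would recast the consensus formulation~\eqref{general consensus} as a two-block problem with objective $G(z) = \sum_r g_r(z_{0r}) + \sum_j \iota_j(z_j)$ and linear coupling constraints $z_{jr} = \tilde z_r$ for every $r \in R_j$. Each $g_r$ and each $\iota_j$ is closed, proper and convex, so $G$ is too, and the coupling is linear. By Fact~\ref{fact:def}(ii) the $z$-minimization in the augmented Lagrangian decouples into the per-route proximal update of line~\ref{updatecentral} and the per-link projections of line~\ref{projectionstep}. By Fact~\ref{fact:def}(iii) applied to the Euclidean projection onto the consensus diagonal, the $\tilde z$-minimization has the closed-form arithmetic average computed in line~\ref{alg2:comm}. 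The dual update matches the canonical scaled update $u \gets u + z - \tilde z$. Classical ADMM convergence for two closed, proper, convex summands under existence of a saddle point---which holds here because $f^\alpha$ is strictly concave on the compact, non-empty feasible set and hence the primal problem admits a unique maximizer by Definition~\ref{defalphafair}---then yields that every copy $z_{jr}^k$ converges to the $(w,\alpha)$-fair allocation $z_r^*$.

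For feasibility, the projection step enforces $z_j^k \in S_j$, i.e., $\sum_{r \in j} z_{jr}^k \leq C_j$ for every link $j$ and every iteration $k$. Since $z_{*r}^k = \min_{j \in r} z_{jr}^k \leq z_{jr}^k$ for all $j \in r$, summing over $r \in j$ gives $\sum_{r \in j} z_{*r}^k \leq \sum_{r \in j} z_{jr}^k \leq C_j$, so the capacity constraint~\eqref{eq:capconU} is satisfied at every $k$. Non-negativity propagates from the fact that $z_{0r}^k$ is the proximal of $\lambda g_r$ and hence lies in $\dom{g_r} \subset \RR_+$, and in the limit all copies agree with $z_{0r}^*$; along the sequence this can be made pointwise by clipping the minimum against $0$ without breaking any other property. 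Finally, the coordinatewise convergence $z_{jr}^k \to z_r^*$ implies $z_{*r}^k \to z_r^*$, so the feasible iterates converge to the optimum.

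The main obstacle I anticipate is purely bookkeeping: identifying the per-domain quantities $z_{pr} = \sum_{j \in J_r \cap J_p} z_{jr}$ and $z_{*pr} = \min_{j \in J_p} z_{jr}$ exchanged in line~\ref{send} with the partial sums and partial minima that standard consensus ADMM would compute over all copies, and checking that the initialization choice $\sum_{j \in I_r} u_{jr} = 0$ is preserved by the iterations so that the simplified update of line~\ref{alg2:comm} coincides with the canonical one. Once this matching is done, both convergence and feasibility follow from the two arguments above.
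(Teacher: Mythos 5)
Your proposal is correct and takes essentially the same approach as the paper: the feasibility argument (per-link projection gives $\sum_{r\in j} z_{jr}^k \leq C_j$, and taking the minimum over the link copies preserves every capacity constraint) is exactly the paper's proof, and the convergence part is the standard consensus-ADMM convergence that the paper invokes implicitly via ``at the optimum, the consensus is reached.'' You merely spell out the two-block reduction and saddle-point existence in more detail than the paper does.
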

\begin{proof}
Consider the iteration number $k$ and drop the superscript $k$ for lightness. For any link $j$, we have by line~\ref{projectionstep} of Algorithm~\ref{FDADMM} that $z_j$ is feasible in link $j$. That is, $\sum_{r\in j} z_{jr} \leq C_j$. Define $z_{*r} = \min_{j\in J_r} z_{jr}$. Then, for each link $j$:
\begin{equation}
\sum_{r\in j} z_{*r} \leq \sum_{r\in j} z_{jr} \leq C_j.
\end{equation}
Thus, no capacity is violated by the allocation $z_{*r}$. At the optimum, the consensus is reached. Thus $(z^k_{*r})_k$ is a feasible sequence that converges to the optimum.
\end{proof}
\noindent The number $z_{*r}$ introduced in Proposition~\ref{claim:feasible} above in fact corresponds to the introduced variable of the same name FD-ADMM. Thus, in a certain way, for sufficiently loaded and communicating domains (i.e. the $|R_p\cap R_q|$ are large enough) we sacrifice some overhead (counted on a per iteration basis) compared to standard dual methods, but in exchange for anytime feasibility, a major feature that dual methods do not generically provide.

\section{Implementation and algorithm tuning}
\label{sec:tuning}
In this section, we discuss two major points in the design of FD-ADMM. First, we precise and justify the choice of the procedure \textsc{Projection}, in line 7 of FD-ADMM Algorithm \ref{FDADMM}. Next, we derive an explicit adaptive update of the reciprocal penalty parameter $\lambda$ that permits to accelerate the convergence of FD-ADMM on any instance.

\subsection{Projection procedure: A discussion}

In Section \ref{sec:consensus}, we advocated a link-wise separation of the formulation because it is non-trivial to project an arbitrary point onto an arbitrary closed convex polyhedron. 
However, the projection onto the sets $S_j$ (see Section~\ref{prel}) can be done with an exact method with a complexity dominated by the one of sorting a list of the size of its dimension. In average, sorting a list of length $q$ is done in $O(q\log q)$. Hence, by operating instead a link-by-link projection, the controllers save a huge amount of time by providing an (generically infeasible) approximate projection point $z_{pr}$ and deriving a locally feasible allocation $z_{*pr}$ (see Algorithm~\ref{FDADMM} line~\ref{send}). Although the quality of the global iterate $z_*$ may be altered by further distribution of the projection, the point is quickly generated. Paradoxically enough, FD-ADMM therefore fully adapts to any network distribution into domains \emph{because} it functions by link, regardless of the network partition into domains. The algorithm we use for \textsc{Projection} in FD-ADMM is presented for instance in \cite{chen2011projection} in which the authors also give a correctness proof and performance demonstration. It permits to provide an efficient update for each domain $J_p$.

%
%

\subsection{Estimating the optimal parameter $\lambda$ }

\label{sec:parameter}

It is well-known that the reciprocal penalty parameter $\lambda$ highly conditions the convergence speed of ADMM. An inaccurate tuning can indeed lead to a very slow convergence. For appropriate problems, it is possible to use a result proven in \cite{Deng2016} to compute an optimal reciprocal penalty parameter, that we here report. It will help us tune FD-ADMM to optimize its convergence performance\footnote{We recall that a differentiable function $f: \RR^n \to \bar{\RR}$ is \emph{strongly convex} with modulus $\sigma$ if $(\nabla f(x) - \nabla f(y))^\T  (x-y)\geq \sigma ||x-y||^2 ,\quad \forall x,y \in \mathrm{dom}(f)$. Moreover, $f$ is \emph{Lipschitz} with modulus $L$ if  $|f(x) - f(y)|\leq L|x-y|, \quad \forall x,y \in \dom{f}$.}. 



\begin{theorem}[\cite{Deng2016}] \label{theo:deng}
Assume that the following problem:
\begin{align}
& \min F(x) + G(z) \tag{M} \label{mat} \\
& \mbox{s.t. } Mx-Pz = 0 \notag
\end{align} 
has a saddle point, and both objective functions are convex. Assume that $M$ has full row rank, and that  $F$ is $\sigma$-strongly convex and has a $L$-Lipschitz gradient. Then, the sequence of iterates (primal and dual concatenated) of ADMM converges linearly with rate $(1+\delta)^{-1}$, where\footnote{$\lambda_{\min}$ is the smallest eigenvalue of a positive matrix, and $||M||$ is the operator norm} $$\delta = 2\left(\frac{ ||M||^2}{\lambda\sigma} +\frac{L \lambda}{\lambda_{\min}(M^\T  M)}\right)^{-1} $$ and $\frac{1}{\lambda}$ is the penalty parameter in the augmented Lagrangian form (see Section \ref{augmentedlag}).
\end{theorem}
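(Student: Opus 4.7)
The plan is to establish linear convergence by constructing a Lyapunov-type potential of the primal--dual error that contracts by the factor $(1+\delta)^{-1}$ at each iteration. I would start by writing down the ADMM updates for problem~\eqref{mat} and reading off the first-order optimality conditions of each subproblem. Combined with the dual update $u^{k+1} = u^k + \lambda^{-1}(Mx^{k+1} - Pz^{k+1})$, these give $\nabla F(x^{k+1}) + M^\T u^{k+1} + \lambda^{-1} M^\T P(z^{k+1}-z^k) = 0$ and the inclusion $P^\T u^{k+1} \in \partial G(z^{k+1})$. Subtracting the KKT relations at a saddle point $(x^*, z^*, u^*)$ yields clean error equations in $\Delta x^{k+1}, \Delta z^{k+1}, \Delta u^{k+1}$.

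Next I would introduce a quadratic potential $\Phi_k := \lambda \|u^k - u^*\|^2 + \lambda^{-1} \|P(z^k - z^*)\|^2$ (the usual choice for ADMM analyses) and expand $\Phi_k - \Phi_{k+1}$ using the dual update together with the error equations. Two monotonicity-type cross terms appear. The term $(\nabla F(x^{k+1}) - \nabla F(x^*))^\T(x^{k+1} - x^*)$ is bounded below by $\sigma\|x^{k+1} - x^*\|^2$ via $\sigma$-strong convexity, while $(P(z^{k+1} - z^*))^\T(u^{k+1} - u^*)$ is non-negative by monotonicity of $\partial G$. In parallel, the $L$-Lipschitz gradient of $F$ supplies the co-coercivity inequality $\|\nabla F(x^{k+1}) - \nabla F(x^*)\|^2 \leq L (\nabla F(x^{k+1}) - \nabla F(x^*))^\T(x^{k+1} - x^*)$, which is the key tool for controlling the residual quadratic terms.

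The full row rank of $M$ then supplies $\|M v\|^2 \geq \lambda_{\min}(M^\T M)\|v\|^2$ on the image of $M^\T$, while the operator-norm bound gives $\|Mv\|^2 \leq \|M\|^2 \|v\|^2$. Applying Young's inequality splits each cross term into one piece absorbed by the strong-convexity estimate and another that is bounded above by a multiple of $\Phi_{k+1}$. Choosing the Young weights optimally makes the surviving contributions precisely $\|M\|^2/(\lambda\sigma)$ and $L\lambda/\lambda_{\min}(M^\T M)$, yielding $\Phi_k - \Phi_{k+1} \geq \delta\,\Phi_{k+1}$ with $\delta$ as stated; iterating delivers the linear rate.

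The main obstacle is this balancing step: the cross terms couple $M\Delta x$, $P(z^{k+1}-z^k)$, and $\Delta u$, and only a very specific Young splitting leaves behind exactly the two terms appearing in $\delta^{-1}$; any other choice produces a strictly worse constant. One must also be careful to invoke $\lambda_{\min}(M^\T M)$ only on the correct subspace, which is exactly what full row rank of $M$ secures. The remaining bookkeeping --- rearranging the $P(z^{k+1}-z^k)$ residue and checking non-negativity of $\Phi_k$ --- is routine once the weights have been fixed.
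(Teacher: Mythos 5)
The paper offers no proof of this result: Theorem~\ref{theo:deng} is quoted, with citation, from \cite{Deng2016}, so the only comparison point is that reference. Your sketch follows essentially the same route as Deng and Yin's argument --- the weighted primal--dual potential $\lambda\|u^k-u^*\|^2+\lambda^{-1}\|P(z^k-z^*)\|^2$, contracted by combining the subproblem optimality conditions with $\sigma$-strong convexity, the co-coercivity consequence of the $L$-Lipschitz gradient, monotonicity of $\partial G$, and the full-row-rank bound on $M$, with an optimized Young splitting yielding the explicit $\delta$ --- and it is sound; the one caveat is that the rank step naturally produces $\lambda_{\min}(MM^\T)$ (the smallest nonzero eigenvalue of $M^\T M$) rather than $\lambda_{\min}(M^\T M)$ itself, a discrepancy inherited from the theorem as transcribed here rather than from your argument.
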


The following result directly follows.
\begin{corollary} \label{cor:1}
The optimal reciprocal penalty parameter is $$\lambda_* =  \displaystyle\sqrt{\frac{||M||^2 \lambda_{\min}(M^\T M)}{\sigma L}}.$$ 
\end{corollary}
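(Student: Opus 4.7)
The plan is to observe that in Theorem 1 the convergence rate $(1+\delta)^{-1}$ is a strictly decreasing function of $\delta$, so the best (smallest) contraction factor is obtained by \emph{maximizing} $\delta$ over admissible $\lambda > 0$. Since
\[
\delta = 2\left(\frac{\|M\|^2}{\lambda\sigma} + \frac{L\lambda}{\lambda_{\min}(M^\T M)}\right)^{-1},
\]
maximizing $\delta$ is equivalent to minimizing the function
\[
h(\lambda) \;:=\; \frac{\|M\|^2}{\lambda\sigma} + \frac{L\lambda}{\lambda_{\min}(M^\T M)}
\]
on $(0,\infty)$. So the whole proof reduces to a one-variable convex optimization.

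Next, I would minimize $h$ by either of two equivalent routes. The first is AM--GM applied to the two positive summands: we have $h(\lambda) \geq 2\sqrt{\frac{\|M\|^2 L}{\sigma\,\lambda_{\min}(M^\T M)}}$, with equality if and only if the two summands coincide, i.e.\ $\frac{\|M\|^2}{\lambda\sigma} = \frac{L\lambda}{\lambda_{\min}(M^\T M)}$, which solves to $\lambda^2 = \frac{\|M\|^2\,\lambda_{\min}(M^\T M)}{\sigma L}$ and hence to the claimed $\lambda_*$. The second route is via calculus: $h$ is smooth and strictly convex on $(0,\infty)$ (one term is convex in $\lambda^{-1}$, the other linear in $\lambda$, and $h\to\infty$ at both endpoints), so the unique critical point of $h'(\lambda) = -\|M\|^2/(\lambda^2\sigma) + L/\lambda_{\min}(M^\T M) = 0$ is the global minimizer, giving the same $\lambda_*$. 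Existence, uniqueness and positivity of $\lambda_*$ are therefore immediate.

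There is no genuine obstacle here; the only thing worth being careful about is confirming that all quantities under the square root are strictly positive, which follows from the standing assumptions of Theorem 1 (the matrix $M$ has full row rank, so $\|M\|^2 > 0$ and $\lambda_{\min}(M^\T M) > 0$, while $\sigma, L > 0$ by strong convexity and Lipschitz smoothness of $F$). One should also note that this $\lambda_*$ optimizes the \emph{bound} $\delta$ from Theorem 1, not necessarily the true contraction rate of ADMM; this distinction is worth mentioning but does not affect the corollary as stated.
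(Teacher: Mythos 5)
Your proposal is correct and matches the paper's (implicit) argument: the paper simply states that the corollary ``directly follows'' from Theorem~\ref{theo:deng}, and the intended computation is exactly yours --- maximize $\delta$ by minimizing the sum of the two terms via AM--GM (or the equivalent first-order condition), which balances the terms and yields the stated $\lambda_*$. Your remark that this optimizes the convergence-rate bound rather than the true contraction rate is a fair caveat but does not affect the statement.
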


In order to be able to apply Corollary \ref{cor:1}, we still need to express the coefficients of interest $\sigma,L_d$. 
\begin{fact}
\label{moduli}

The function $g=\bigoplus_r{ g_r}$ is $\sigma$-strongly convex and has $L_d$-Lipschitz gradient with: $$\sigma = \alpha \min_r \frac{w_r}{B_r^{\alpha +1}}, \quad L_d = \alpha\max_r{\frac{w_r}{d_r^{\alpha+1}}}$$ on any compact subset of $\mathrm{dom}(g)$ of the form $K_d = \{x \geq d, Ax\leq C\}$, for $d\gg0$, and $B_r = \min_{j\in r} C_j$.
\end{fact}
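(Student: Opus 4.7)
The plan is to exploit the separability of $g$ to reduce the problem to a one-dimensional analysis and then use the explicit form of $g_r$ to compute and bound its second derivative pointwise on $K_d$.

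First I would compute, for each route $r$, the derivatives of $g_r = -f_r^\alpha$. In both cases $\alpha = 1$ and $\alpha \neq 1$ one obtains $g_r'(x_r) = -w_r x_r^{-\alpha}$ and $g_r''(x_r) = \alpha w_r x_r^{-\alpha-1}$. Since $g(x) = \sum_r g_r(x_r)$, its Hessian $\nabla^2 g(x)$ is diagonal with entries $\alpha w_r / x_r^{\alpha+1}$. Hence strong convexity with modulus $\sigma$ on $K_d$ amounts to a uniform lower bound of these diagonal entries, and a Lipschitz gradient with modulus $L_d$ amounts to a uniform upper bound; the two bounds follow if we can bracket each $x_r$ above and below on $K_d$.

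Next I would establish the two coordinatewise bounds for $x \in K_d = \{x \geq d,\ Ax \leq C\}$. The lower bound $x_r \geq d_r$ is immediate from the definition of $K_d$. For the upper bound, fix any $j \in r$; the capacity constraint $\sum_{s \ni j} x_s \leq C_j$ combined with $x_s \geq 0$ yields $x_r \leq C_j$, and minimising over $j \in r$ gives $x_r \leq B_r = \min_{j \in r} C_j$. Consequently, for every $x \in K_d$,
\begin{equation*}
\alpha\,\frac{w_r}{B_r^{\alpha+1}} \;\leq\; \alpha\,\frac{w_r}{x_r^{\alpha+1}} \;\leq\; \alpha\,\frac{w_r}{d_r^{\alpha+1}}.
\end{equation*}

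Taking the minimum over $r$ on the left and the maximum over $r$ on the right, these coordinatewise bounds carry over to the diagonal Hessian as operator bounds, i.e.\ $\sigma I \preceq \nabla^2 g(x) \preceq L_d I$ on $K_d$, with $\sigma$ and $L_d$ as claimed. Standard integration along segments contained in the convex set $K_d$ then converts the Hessian bounds into the required monotonicity inequality for strong convexity and into the Lipschitz estimate on $\nabla g$. I do not foresee a genuine obstacle here; the only subtlety worth flagging is that the moduli depend on the compact $K_d$ (via $d$) because $\nabla^2 g$ blows up as any coordinate tends to zero, which is precisely why the estimate is stated on $K_d$ rather than on the whole of $\mathrm{dom}(g)$.
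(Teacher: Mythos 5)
Your proof is correct and follows essentially the same route as the paper: exploit separability, bracket each coordinate via $d_r \leq x_r \leq B_r$ on $K_d$ (the upper bound coming from any capacity constraint $j \in r$), and bound the one-dimensional curvature $\alpha w_r x_r^{-\alpha-1}$ accordingly. The only cosmetic difference is that you bound the diagonal Hessian and integrate along segments, whereas the paper applies the mean value theorem directly to the gradient differences with separate cases for $\alpha>1$, $\alpha<1$, $\alpha=1$; your formulation handles all $\alpha$ uniformly.
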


\begin{proof}

Consider the case $g = g_r$. We start with the calculus of $\sigma$. We write $g(x) = -w \frac{x^{1-\alpha}}{1-\alpha}$, when $\alpha \neq 1$, and $g(x) = -w \log(x)$ when $\alpha = 1$. Suppose $\alpha > 1$. For $y<x<t \in \mathrm{dom}(g)$, we have:
\begin{eqnarray*}
(\nabla g(x) - \nabla g(y)) (x-y) & =& w(\frac{1}{y^\alpha} - \frac{1}{x^\alpha})(x-y)\\
				      & = &\frac{w}{x^\alpha y^\alpha}(x^\alpha - y^\alpha)(x-y)\\
				      & = & \frac{w}{x^\alpha y^\alpha}\alpha c^{\alpha - 1}(x-y)^2), \quad (y<c<x)\\
				      & = &\frac{\alpha w}{x^\alpha y}\left(\frac{ c}{y}\right)^{\alpha -1}|x-y|^2\\
				      & \geq & \frac{\alpha w}{x^{\alpha}y} |x-y|^2\\
				      & \geq & \frac{\alpha w}{t^{\alpha+1}} |x-y|^2,
\end{eqnarray*}
where the third equality is just an application of the mean value theorem. The case $\alpha <1$ is handled likewise by integrating $x^{\alpha-1}$ into the parenthesis instead of $y^{\alpha-1}$ in the fourth line. The case $\alpha = 1$ is straightforward. By plugging in $x,y$ an appropriate sequence, say, respectively, $t-\frac{1}{n}$ and $t-\frac{2}{n}$, one can see that this bound is tight.

As for the Lipschitz factor, similarly, take $s<y<x$, we have:
\begin{eqnarray*}
|\nabla g(x) - \nabla g(y)|  & =& w|\frac{1}{y^\alpha} - \frac{1}{x^\alpha}|\\
				      & = &\frac{w}{x^\alpha y^\alpha}|x^\alpha - y^\alpha|\\
				      & = & \frac{w}{x^\alpha y^\alpha}\alpha c^{\alpha - 1}|x-y|\\
				      &  & \qquad (y<c<x)\\
				      & \leq & \frac{\alpha w}{s^{\alpha+1}}|x-y|,
\end{eqnarray*}
where the last line is obtained in the same fashion as for the calculus of $\sigma$, for each case $\alpha <1$, $\alpha > 1$. The case $\alpha = 1$ is straightforward.
Now, consider $g = \bigoplus w_r g_r$. For $x,y\in \dom{g}$,
\begin{eqnarray*}
						& & \left(\nabla g(x) - \nabla g(y)\right)^\T  \left(x-y\right)\\
						 & =& \sum_r (\nabla g_r(x_r) - \nabla g_r(y_r))(x_r - y_r)\\
						  & \geq & \alpha \sum_r \frac{w_r}{B_r^{\alpha +1}}|x_r - y_r|^2\\
						  & \geq & \alpha \min_r\frac{w_r}{B_r^{\alpha +1}} \sum_r |x_r - y_r|^2\\
						  & = & \alpha \min_r\frac{w_r}{B_r^{\alpha +1}} ||x-y||^2.
\end{eqnarray*}
The derivation of $L_d$ is similar.\qedhere\end{proof}

Unfortunately, Corollary \ref{cor:1} cannot be directly applied to our general consensus formulation. Indeed, its matricial formulation does not provide a full-row rank matrix $M$. The problem which the Theorem \ref{theo:deng} applies to is actually the original, centralized one in (\ref{ConvObj}-\ref{ConvConst}). Therefore, \emph{we will derive a reciprocal penalty parameter selection for the centralized problem, and use it as a tool to estimate a satisfactory parameter for FD-ADMM}.
 
However, the last difficulty we encounter in choosing the optimal reciprocal penalty parameter is to correctly evaluate the Lipschitz modulus. Unfortunately, $\nabla g$ is \emph{not} Lipschitz on the feasibility set, because of the singularity of each $g_r$ at $0$. In order to circumvent this problem, we introduce the classic concept of \emph{disagreement point} $d$, according to \emph{bargaining theory} terminology. 
A disagreement point $d$ represents the minimal values for an allocation of each route. This allows to reduce the feasibility set to a compact subset of the form $K_d, d\gg 0$, on which $\nabla g$ is now Lipschitz. The disagreement point can be naturally defined as the feasible point $z_*$ at the first iteration. Generically, there is no \emph{a priori} guarantee that the set $K_{z^*}$ contains the optimum, but, we remark that at least in the first iterations, the use of $z_*$ provides a good approximation of the best reciprocal penalty parameter. 
The analytical evaluation of this phenomenon goes beyond the scope of this paper and we keep it for future work. 

Thus, finally, we update $\lambda$ in an adaptive fashion in the beginning of the algorithm with the help of those points. 
We found empirically that operating such update only at the initial steps of FD-ADMM and then fixing $\lambda$ for the rest of the execution provides a good performance in terms of convergence speed. In the next section, we describe this typical phenomenon in Figure~\ref{fig:stepsize}. In all our simulations, we use the simple following update scheme to estimate the optimal penalty parameter at each execution of the algorithm. 

\begin{scheme}[Reciprocal Penalty Adaptation]
\label{alg:penalty_adaptation}
Set threshold $\tau$. At all iterations below $\tau$, denote by $p$ the last output of a feasible point. Then, choose the new reciprocal penalty parameter as:
$$ \lambda_* = \frac{1}{\alpha}\left(\min_{r\in R} \frac{w_r}{B_r^{\alpha+1}} \max_{r\in R}\frac{w_r}{p_r^{\alpha +1}}\right)^{-\frac{1}{2}}.$$
After $\tau$ iterations, do not update $\lambda$.
\end{scheme}
In our numerical evaluations we will set $\tau=30$. Thus, FD-ADMM is now fully tuned and we are ready to demonstrate its performance in the next section, in terms of convergence speed in real-time scenarios.


 
\section{Performance analysis}  \label{sec:execution}

    \begin{figure*}[t!]
    \hspace{-5mm}
   \centering
\includegraphics[width = .95\textwidth]{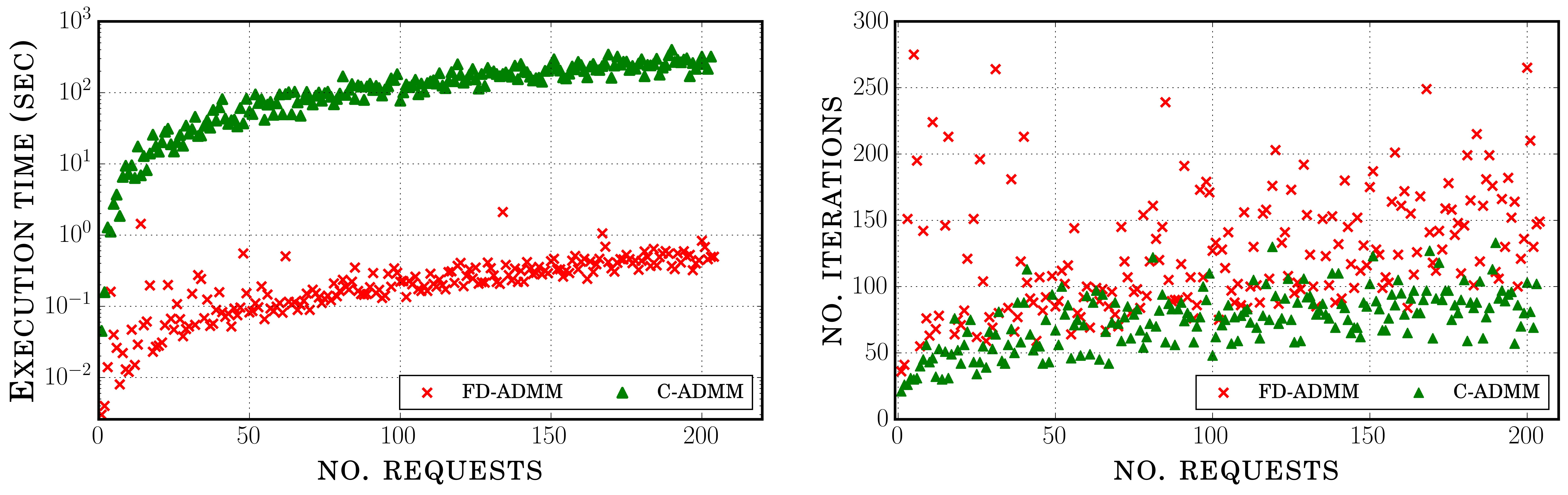}
\caption{C-ADMM against FD-ADMM: execution time and iteration count.}
 \label{fig:itercvsfd}
 \end{figure*}

We now evaluate numerically FD-ADMM in terms of its convergence properties. More specifically, in Section \ref{algodesign} we compare the performance of FD-ADMM and C-ADMM in offline scenarios where the optimum is desired. In Section  \ref{responsiveness} we evaluate FD-ADMM in real-time scenarios, where good and feasible solutions are needed on-the-fly as weights $w_r$ vary over time. In order to benchmark the transient properties of FD-ADMM we use the standard Lagrangian dual decomposition approach (LAGR) for single-path routing in \cite{voice2006stability, mccormick2014real,palomar2006tutorial}, that we recall in Algorithm~\ref{Kelly}. We here assume that domain controllers operate in synchronous mode. In this case, the decomposition into domains has no impact on FD-ADMM performance, as projection is on a link-basis. 
All simulations are made for the proportional fairness objective functions ($\alpha = 1$). We used the proximal operation formulas found in \cite{boyd2011distributed}.  
The algorithms under investigation were evaluated using BT's 21 CN network topology\footnote{We would like to thank the authors of \cite{mccormick2014real} for their willingness to share the BT 21 CN topology dataset.}, containing 106 nodes and 474 links. The requests were generated by computing the shortest path between randomly chosen sources and destinations.
\vspace{-1mm}
\begin{algorithm}[b]
	\caption{Lagrangian-based gradient descent (LAGR)}
	\label{Kelly}
	\begin{algorithmic}
		\Require{Initial positive values $u_j$}
		\While{a suitable termination condition is not met}
		\State $x_r \gets \argmax{x \geq 0} \{ f_r(x) - x \sum_{j: j \in r} u_j)\} \quad \forall r$
		\State $u_j = u_j - \frac{u_j}{2 C_j} (C_j - \sum_{r: j \in r} x_r) \quad \forall j$
		\EndWhile
	\end{algorithmic}
\end{algorithm}


 \begin{figure}[h!]
 \centering
\includegraphics[width = .47\textwidth]{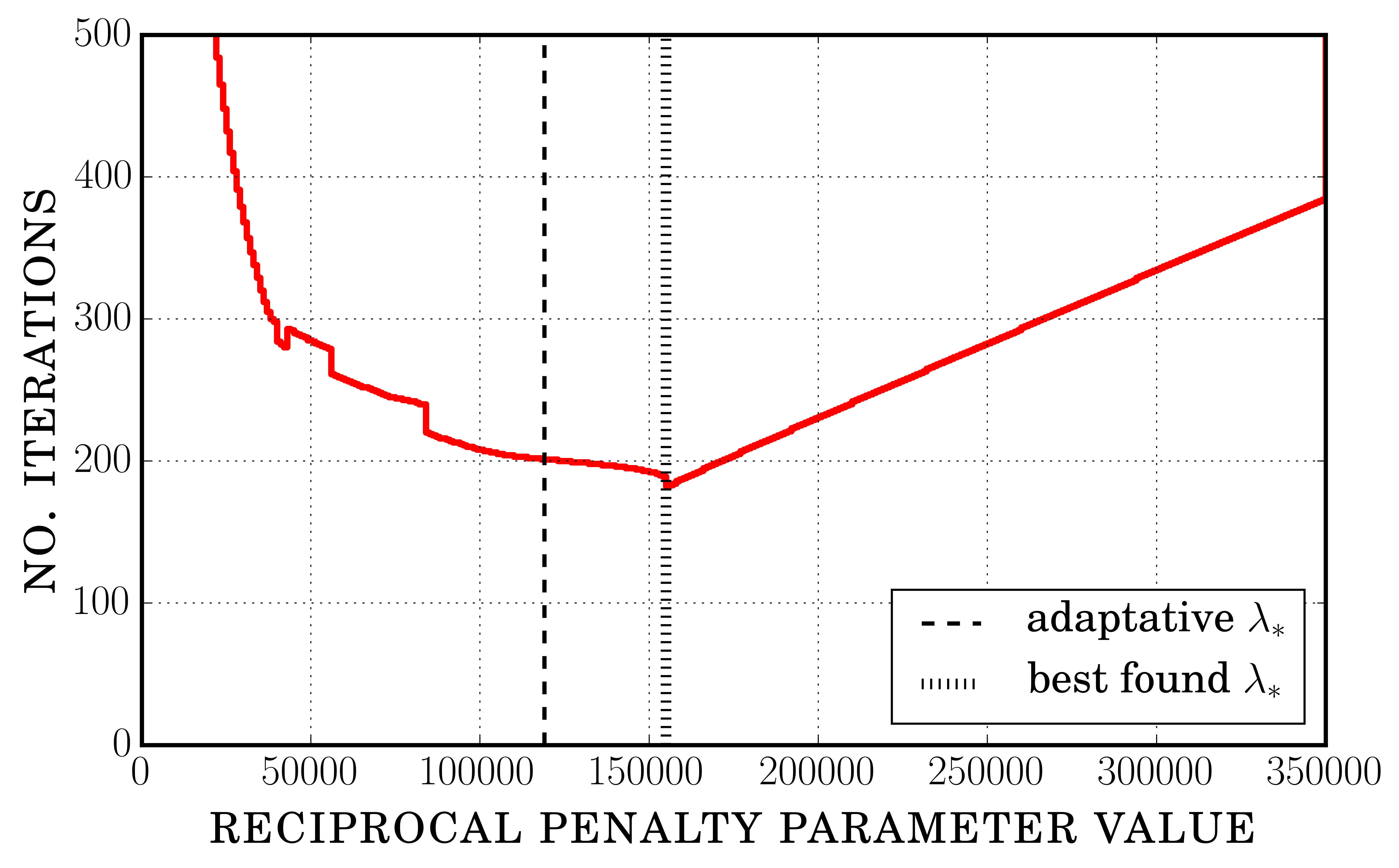}
\hspace{1mm}
\caption{Convergence rate of FD-ADMM vs. reciprocal penalty parameter. The adaptive approximation demonstrates sufficient accuracy.}
 \label{fig:stepsize}
 \end{figure}
\begin{figure}[t]
\centering
\vspace{-4.5mm}
\hspace{-5mm}
\includegraphics[width = .52\textwidth]{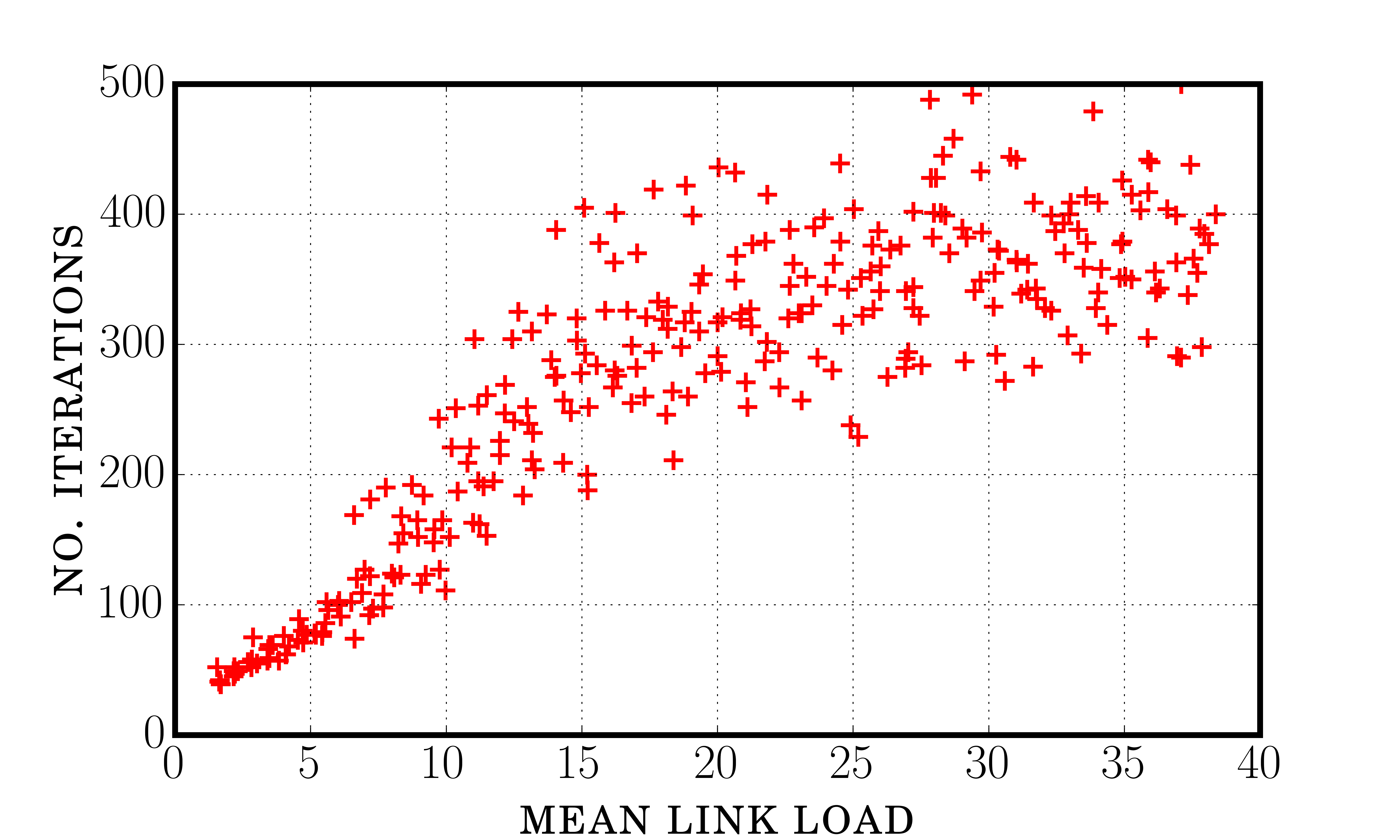}
\caption{Iteration count for FD-ADMM vs the mean link load (average value of the $|R_j|$).}
\label{fig:meanload}
\end{figure}

 \subsection{Algorithm design}
 \label{algodesign}
 Evaluating the alleviation of the compute-intensive parts of C-ADMM was a key concern to motivate and validate the distribution to FD-ADMM. To this aim, we show in Figure~\ref{fig:itercvsfd} the computation time and iteration count for those two algorithms on small instances for a number of requests ranging from 1 to 200. The centralized projection in C-ADMM is executed using the variation of Hildreth's projection algorithm on general polyhedra in \cite{iusem1987simultaneous}. When convergence is desired, a precise stopping criterion for FD-ADMM is available, as the optimality gap can be upper-bounded by the primal and dual residuals, see \cite{boyd2011distributed}. In our case, evaluating those residuals results in computing the absolute variation of two consecutive values of $\tilde{z}$, and the consensus accuracy\footnote{One can choose any other norm in $\bigoplus \RR^{|R_j|}$.} $\max_{r\in R, j\in r}| z_{jr} - \tilde{z}_r|$. This is a first advantage for FD-ADMM implementation as no robust stopping criterion is available for standard gradient descent. When an optimality gap is computed, we thus consider a $10^{-6}$-approximation by FD-ADMM as the reference for all tested algorithms. In Figure~\ref{fig:stepsize}, we illustrated, on a small instance with 200 requests, the number of iterations of FD-ADMM to reach convergence for a various number of the parameter values, in order to evaluate our \emph{adaptive} scheme's accuracy with respect to the empirically \emph{best found} parameter. It shows that our approximation of $\lambda_*$ is fairly satisfactory. In Figure~\ref{fig:itercvsfd}, FD-ADMM shows that distributing the consensus over the links exchanges several more iterations for a reduction of the compute time by two orders of magnitude for small instances. Hence, the distribution does not seem to cost too much convergence rate.  Not surprisingly, the use of a central projection sub-routine makes C-ADMM impossible to scale. The convergence criterion used in Figures~\ref{fig:itercvsfd} and \ref{fig:meanload} is modest ($10^{-1}$). Finally, we plotted a notable behavior of FD-ADMM in Figure~\ref{fig:meanload}. One can imagine a link between the convergence rate and the mean link load, i.e., $\frac{1}{[J|}\sum |R_j|$. This conjecture requires further investigation that we keep for future work.

\subsection{Comparison against Lagrangian method}
\label{responsiveness}

We now compare the proposed FD-ADMM algorithm against the classic LAGR  Algorithm \ref{Kelly}, see \cite{voice2006stability, mccormick2014real,palomar2006tutorial}. To this aim we perform two experiments, in real-time and static scenarios, respectively. 

We start by evaluating the real-time responsiveness of FD-ADMM by considering a small scenario where 200 routes are established and the weights $(w_r^t)_{r\in R, t\in 0\ldots T}$ vary over discrete time $t$, following the formula:
$$w_r^{t+1} \in [(1-a)w_r^t, (1+a)w_r^t]\quad a\in [0,1], $$
where at each event $t$, $w_r^t$ is chosen uniformly within the above interval in which $a$ determines the amplitude of the weight variation. 
In Figure~\ref{fig:optgapVSa} we illustrated the average optimality gap of the two algorithms achieved over 20 events with 10 iterations between each event. We observe that FD-ADMM outperforms LAGR in terms of optimality gap, although the performance of both algorithms is fairly acceptable. However, remarkably, FD-ADMM remains always feasible whereas LAGR constantly violates the constraints as weights $w_r$ change in real-time. Figure~\ref{fig:violated} shows the percentage of constraints of the problem that are violated for each value of the amplitude $a$. 
In fact, LAGR iteratively approaches the fair resource allocation from the outside of the feasible set. This drawback is commonly amended by projecting the solution onto the feasible set. However, this is not doable in our distributed setting, as projection requires costly on-the-fly operations that require full topological information. For such reasons, we claim that the standard LAGR algorithm is not well suited for computing real-time fair allocations in a distributed SDN setting.


 \begin{figure}[t]
 	\centering
 	\includegraphics[width = .45\textwidth]{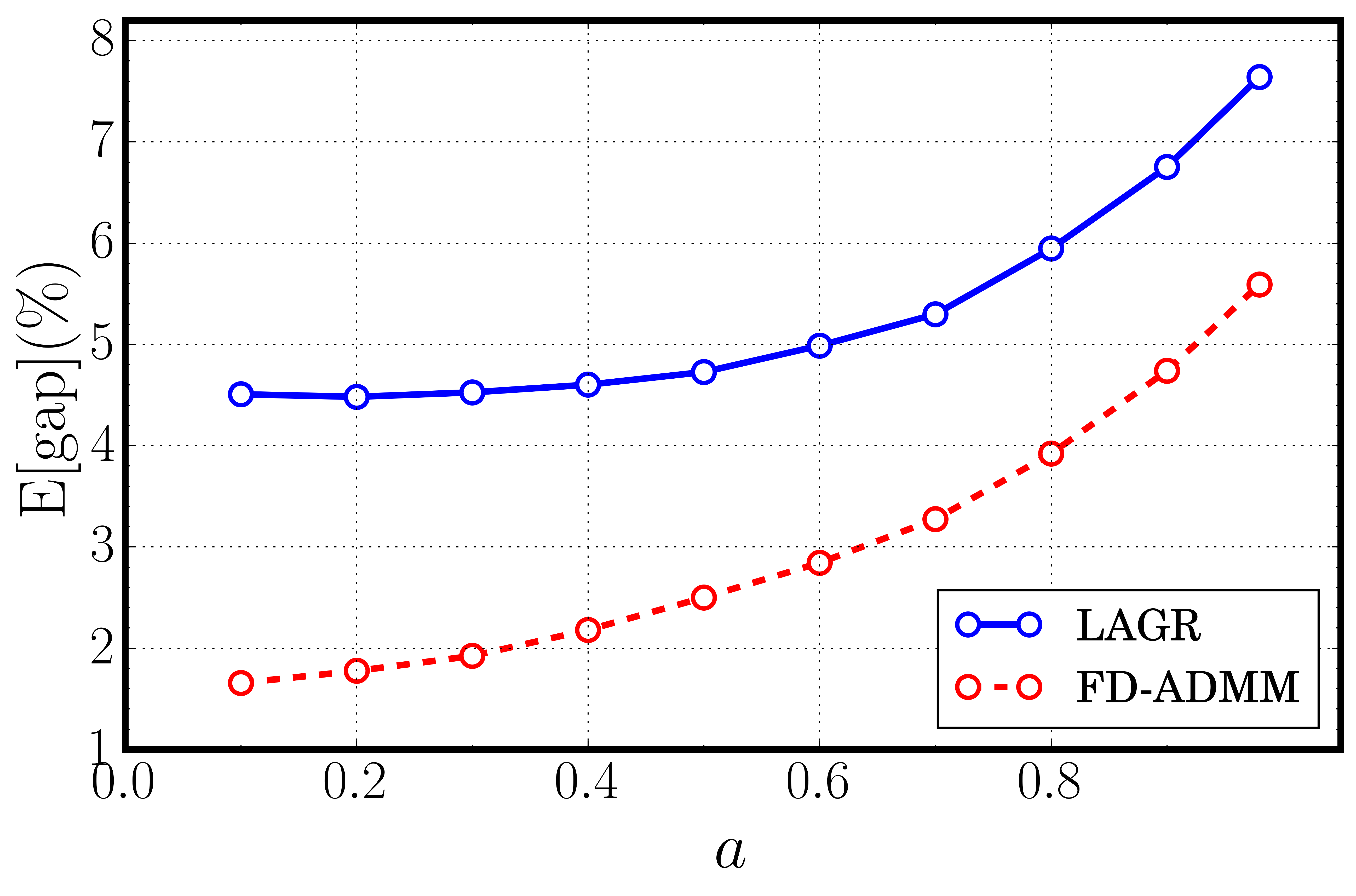}
 	\caption{Average optimality gap $\mathrm{E}[\mathrm{gap}]$ vs. the variation amplitude $a$.}
 	\label{fig:optgapVSa}
 \end{figure}
 
 \begin{figure}[t]
 	\hspace{-4mm}
 	\centering
 	\includegraphics[width = .45\textwidth]{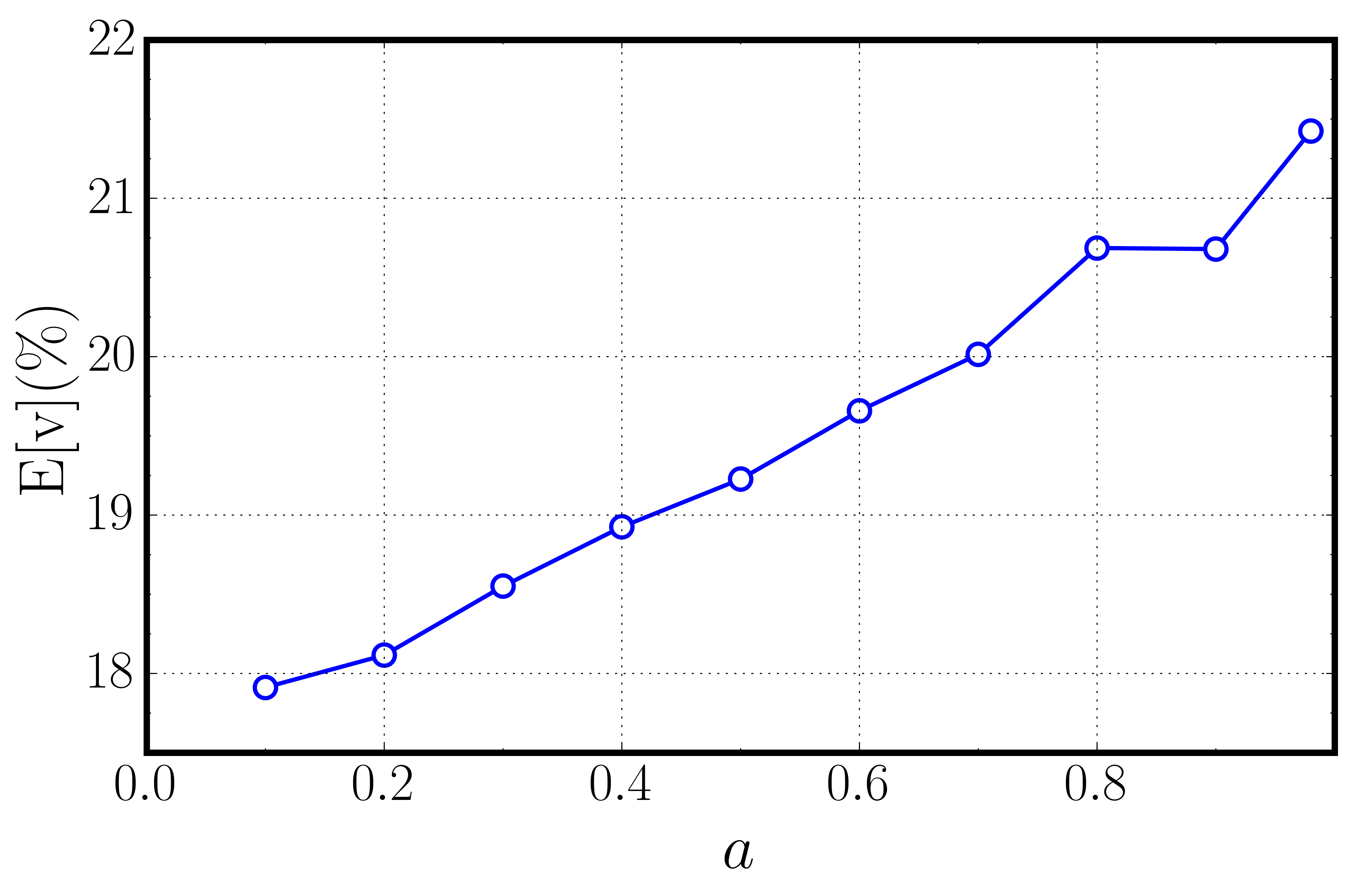}
 	\caption{Average percentage of violated constraints $\mathrm{E}[\mathrm{v}]$ by LAGR vs. the variation amplitude $a$.}
 	\label{fig:violated}
 \end{figure}

In our last experiment we test the two algorithms under a static scenario, where the weights $w_r$ do not vary over time and LAGR has enough time to find at least one feasible solution. In Figure~\ref{fig:realtime} we compare the optimality gap of the \emph{best feasible} solutions found after 5 seconds runtime by FD-ADMM and LAGR, for different instance sizes over BT topology. We observe that FD-ADMM obtains a close-to-optimal feasible solution for all the instance sizes (from 100 to 6000 requests), while LAGR is still far from the optimum especially when the instance becomes large.

To recap, in this section we have demonstrated by experimentation that FD-ADMM reacts quickly to unpredictable network variations, while preserving the feasibility of the solutions computed iteratively. We then claim that FD-ADMM is a good candidate for real-time fair resource allocation in distributed SDN scenarios. 

   \begin{figure}[t]
   	\vspace{-4mm}
   	\centering
   	\includegraphics[width = .5\textwidth]{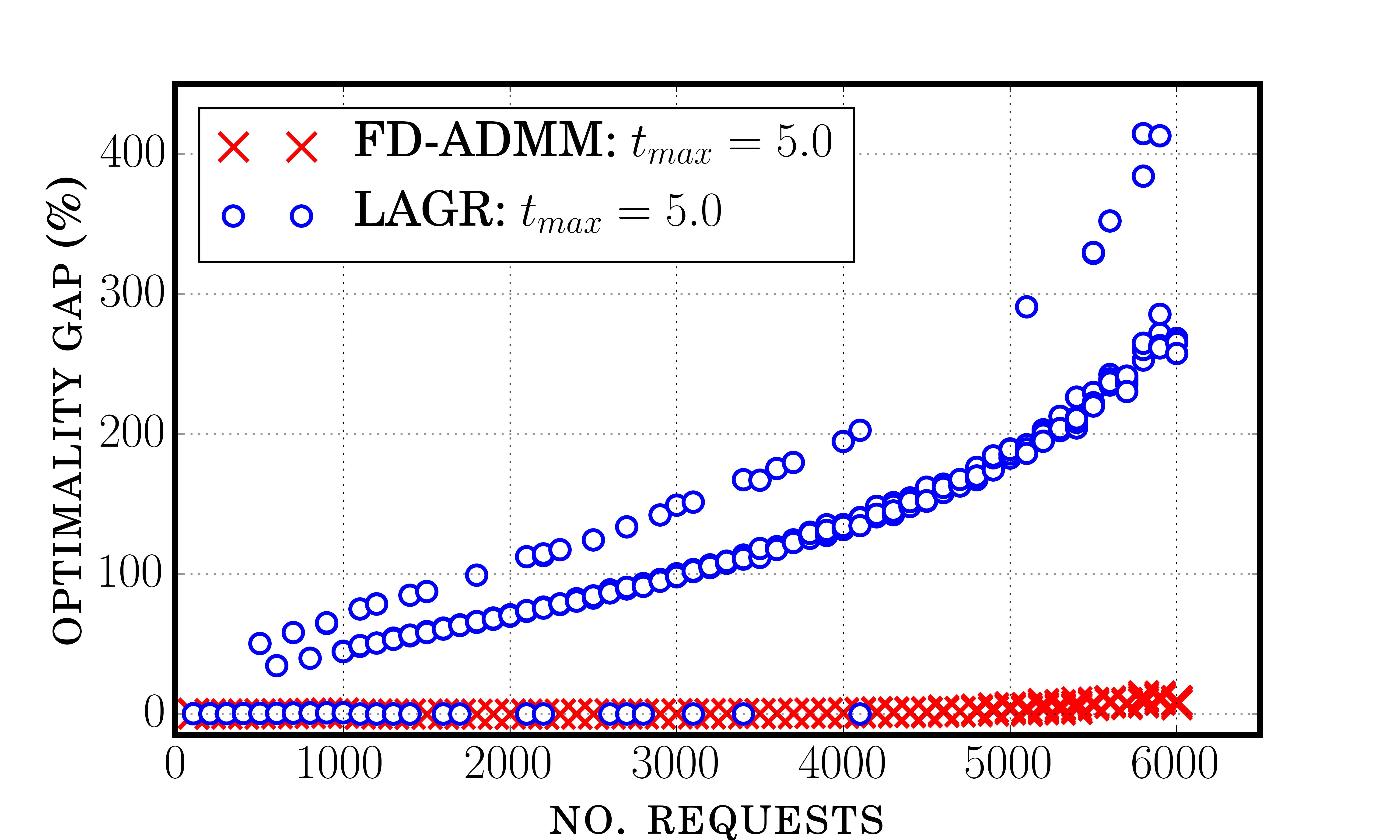}
   	\caption{Optimality gap of the best feasible point found after 5 seconds runtime.}
   	\label{fig:realtime}
   \end{figure}
 
 \section{Conclusions and future work}
  \label{sec:conclusion}
In this paper we addressed the real-time fair resource allocation problem in the context of a distributed SDN control plane architecture. Our main contribution is the design of a distributed algorithm that continuously generates a sequence of feasible solutions and 
adapts to any partitioning of the network into domains. We reformulated the $\alpha$-fair resource allocation problem in the fashion of a general consensus problem to derive the FD-ADMM algorithm. This algorithm can be massively parallelized on several processors that manage different regions of the network, hence fully benefiting from the computing resources of SDN controllers in distributed architectures. We also provided a strategy for a near-optimal estimation of the penalty parameter of FD-ADMM that boosts its convergence. Finally, we compared FD-ADMM to a standard dual Lagrangian decomposition method (LAGR) and we demonstrated how the former is more adapted to a real-time situation where bandwidth has to be adjusted on-the-fly. 
In fact, FD-ADMM ensures a smaller optimality gap since the very first iterations and, most importantly, it produces a feasible fair allocation at all iterations.  

As a next step, we envision to adapt our formulation to the case where multiple candidate paths are available for each request. Moreover, we plan to run FD-ADMM asynchronously while still guarantying near-optimal convergence rate and anytime feasibility.

\bibliography{bibliog}
\bibliographystyle{plain}

\newpage
\tableofcontents
\end{document}